\newcommand{\ssplane}{\mathrm{PL}}
\newcommand{\ffreach}{\mathrm{Reach}}
\newcommand{\ffreachp}{\mathrm{Reach}^{\text{p}}}
\newcommand{\ffreache}{\mathrm{Reach}}
\newcommand{\ffg}{\mathrm{G}}
\newcommand{\ffdepth}{\mathrm{d}}
\newcommand{\ffdir}{\mathrm{dir}}
\newtheorem{theorem}{Theorem}
\newtheorem{stat}{Statement}
\newtheorem*{conseq}{Corollary}
\begin{document}

\begin{center}\LARGE{\textbf{Improved Monotone Circuit Depth Upper Bound for Directed Graph Reachability}} \end{center}
\begin{center}
\large{S. A. Volkov \\
(Moscow, 2008)}
\end{center}

%%% \section*{Глубина схем, распознающих достижимость в графе}

\subsection*{Abstract}

We prove that the directed graph reachability problem can be
solved by monotone fan-in $2$ boolean circuits of depth
$(1/2+o(1))(\log_2 n)^2$, where $n$ is the number of nodes. This
improves the previous known upper bound $(1+o(1))(\log_2 n)^2$.
The proof is non-constructive, but we give a constructive proof of
the upper bound $(7/8+o(1))(\log_2 n)^2$.

\subsection*{Definitions}

By $[x]$ denote the floor of $x$.

For real-value functions $f,g$ we will write $f\asymp g$, if
$f=O(g)$, and $g=O(f)$; we will write $f\thicksim g$, if $\lim
(f/g)=1$.

By $\ffg(g_{11},g_{12},\ldots,g_{1n},\ldots,g_{n1},g_{n2},\ldots,
g_{nn})$ denote the directed graph such that the set of vertices
of this graph is $\{1,\ldots,n\}$, and the adjacency matrix of
this graph is $\{g_{ij}\}.$

By $V(G)$ denote the set of vertices of the graph $G$. Similarly,
by $E(G)$ denote the set of edges of the graph $G$.

We say that the path $p$ has \emph{length} $l$, if $p$ has $l$
edges.

The graph $G'$ is called \emph{$l$-closure} of the graph $G$, if
$V(G')=V(G)$, and
$$E(G')=\{(i,j):\ \text{$G$ has a path from $i$ to $j$ with length $\leq l$}\}.$$

Suppose $G$ is a graph, $S\subseteq V(G)$; by $G_S$ denote the
graph such that $V(G_S)=S$, $E(G_S)=\{(i,j)\in E(G):\ i,j\in S\}$.

In the following definitions $\ffg$ is an abbreviated notation for
$\ffg(g_{11},\ldots,g_{nn})$.

Put
$$
\ffreach_n(g_{11},\ldots,g_{nn})=\begin{cases}1,\text{ if the
graph $\ffg$ has a path from $1$ to $n$,} \\
0\text{ in the other case},\end{cases}
$$
$$
\ffreache_{n,l}(g_{11},\ldots,g_{nn})=\begin{cases}1,\text{ if the
graph $\ffg$ has a path from $1$} \\ \quad\quad\text{to $n$ with length $l$,} \\
0\text{ in the other case},\end{cases}
$$
$$
\ffreachp_{n,l}(g_{11},\ldots,g_{nn})=\begin{cases}1,\text{ if the
graph $\ffg$ has a path from $1$ to $n$} \\ \quad\quad\text{with length $\leq l$,} \\
0,\text{ if there is no path from $1$ to $n$ in the graph $\ffg$,} \\
 \text{undefined in the other cases}.\end{cases}
$$

In this paper we consider boolean monotone circuits with fan-in
$2$, i.e., circuits over the basis $\{x\&y,\ x\vee y\}.$

By $\ffdepth(\Sigma)$ denote the depth of the circuit $\Sigma.$

The ordered family of sets $(S_1,\ldots,S_m)$ is called an
$(n,m,s,l,d)$-family, where $n$, $m$, $s$, $l$, $d$ are natural
numbers, if the following conditions hold:
\begin{enumerate}
\item $|\bigcup_{i=1}^m S_i|\leq n$; \item for any $i$ we have
$|S_i|\leq s$; \item For any $A\subseteq\{1,\ldots,m\}$ such that
$|A|\geq\frac{md}{l}$, we have $|\bigcup_{i\in A}S_i|\geq n-d+1$.
\end{enumerate}
Order is needed for calculation convenience only (in the proof of
statement \ref{stat_family}).

Suppose $M=(S_1,\ldots,S_m)$ is a family of sets; by $\bigcup M$
denote the set $|\bigcup_{i=1}^m S_i|$.

\subsection*{Proof of the upper bound}

\begin{stat}\label{statborodin}There exists a family of monotone circuits $\Sigma_{n,l}$ such that for any
$n,l$, $\Sigma_{n,l}$ realizes $\ffreache_{n,l},$ and
$$
\ffdepth(\Sigma_{n,l})=\log_2 n\cdot\log_2 l+O(\log n).
$$
\end{stat}
\begin{proof}
By repeated matrix squaring.
\end{proof}

\begin{stat}\label{stat_meet}
Let $M$ be an $(n,m,s,l,d)$-family, $A_0,\ldots,A_{l'}$ ($l'\leq
l$) be different elements of $\bigcup M$; then there exists a set
$S\in M$ and numbers $k\leq\frac{l}{d}$, $0=i_0<i_1<\ldots<i_k=l'$
such that $A_{i_1},\ldots,A_{i_{k-1}}\in S$, and for any $j$
($0\leq j<k$) we have $i_{j+1}-i_j\leq 2d$.
\end{stat}
\begin{proof}[Proof]
Without loss of generality we can assume that $l'\geq 2d+1$.
Suppose $k=\left[\frac{l'}{d}\right]$,
$B_i=\{A_{id},A_{id+1},\ldots,A_{id+d-1}\}$, $1\leq i\leq k-1$. We
now prove that there exists $S\in M$ such that for any $i$ ($1\leq
i\leq k-1$) we have $S\cap B_i\neq\varnothing$. Assume the
converse. Then for any set $B_i$ there exists an $S\in M$ such
that $S\cap B_i=\varnothing$. Then there exists a set $B_i$ such
that there exists at least $\frac{m}{k-1}\geq \frac{md}{l}$ sets
$S\in M$ such that $S\cap B_i=\varnothing$. Since $|B_i|=d$, we
see that the cardinality of the union of these sets S does not
exceed $n-d$. This contradicts the definition of an
$(n,m,s,l,d)$-family.

Take the set $S\in M$ such that for any $B_j$ we have $S\cap
B_j\neq\varnothing$. Also, take $i_1,\ldots,i_{k-1}$ such that
$A_{i_j}\in B_j\cap S$ ($1\leq j\leq k-1$). Put $i_0=0$, $i_k=l'$.
Obviously, the numbers $i_0,\ldots,i_k$ satisfiy the conditions of
the statement.
\end{proof}

\begin{stat}\label{stat_reachp_ind}
Let $M$ be an $(n,m,s,l,d)$-family; suppose a monotone circuit
$\Sigma'$ realizes $\ffreachp_{s+2,\left[\frac{l}{d}\right]}$;
then there exists a monotone circuit $\Sigma$ such that $\Sigma$
realizes $\ffreachp_{n,l}$, and
$$
\ffdepth(\Sigma)=\log_2 m+\log_2 n\cdot\log_2
d+\ffdepth(\Sigma')+f(n)
$$
for some fixed function $f(n)=O(\log n)$.
\end{stat}
\begin{proof}[Proof]
Without loss of generality , we can assume that all sets of the
family $M$ are subsets of $\{1,\ldots,n\}$. Adding $1$ and $n$ to
each set of $M$, we obtain the $(n,m,s+2,l,d)$-family $M'$.

Now we describe a construction of the circuit $\Sigma$. Let $G$ be
an input graph. Suppose $G'$ is the $2d$-closure of the graph $G$.
From statement \ref{statborodin} it follows that the adjacency
matrix of $G'$ can be obtained by a monotone circuit of depth
$\log_2 n\cdot\log_2 d+O(\log n)$.

For each set $S\in M'$ we construct a block $\Sigma_S$ which is a
clone of the circuit $\Sigma'$; it takes the adjacency matrix of
$G'_S$ to it's input ($\Sigma_S$ determines the existence of a
needed path in $G'_S$ from the vertex $1$ to the vertex $n$); if
$|V(G'_S)|<s$, then the input matrix of $\Sigma_S$ is expanded by
zeros.

The disjunction of all outputs of the circuits $\Sigma_S$, $S\in
M'$ is declared to be the output of $\Sigma$.

It's clear that $\ffdepth(\Sigma)$ satisfies the condition of this
statement. Also, it's obvious that $\Sigma$ is monotone. Let us
prove that $\Sigma$ realizes $\ffreachp_{n,l}$.

Suppose $G$ does not have a path from $1$ to $n$; in this case
$G'$ does not have a path from $1$ to $n$; therefore, all blocks
$\Sigma_S$, $S\in M'$ produce $0$ to their outputs; thus, $\Sigma$
produces $0$.

Suppose $G$ has a path from $1$ to $n$ of length $\leq l$; let
$A_0A_1\ldots A_{l'}$ be the shortest of these paths; from the
statement \ref{stat_meet} it follows that there exists a set $S\in
M'$ and numbers $k\leq \frac{l}{d}$, $0=i_0<i_1<\ldots<i_k=l'$
such that $A_{i_1},\ldots,A_{i_{k-1}}\in S$ and for any $j$
($0\leq j<k$) we have $i_{j+1}-i_j\leq 2d$; also, by construction,
$A_{i_0},A_{i_k}\in S$; from this it follows that $G'$ has a path
$A_{i_0}A_{i_1}\ldots A_{i_k}$; note that the length of that path
does not exceed $\left[\frac{l}{d}\right]$, and all vertexes of
that path are in $S$; it follows that $\Sigma_S$ produces $1$ to
it's output. Therefore, $\Sigma$ produces $1$ to it's output.
\end{proof}

\begin{stat}\label{stat_family}
If the natural numbers $n,m,s,l,d$, $d\leq n$ satisfy the
condition
$$
\frac{dm \ln m}{l}+d\ln n-\frac{smd^2}{nl}<0,
$$
then there exists an $(n,m,s,l,d)$-family.
\end{stat}
\begin{proof}[Proof]
Consider matrixes $\{a_{ij}\}$, $1\leq i\leq m$, $1\leq j\leq s$,
$1\leq a_{ij}\leq n$. To each matrix $\{a_{ij}\}$ assign the
ordered family of sets
$(\{a_{11},\ldots,a_{1s}\},\ldots,\{a_{m1},\ldots,a_{ms}\})$.

Suppose $M\subseteq\{1,\ldots,m\}$, $D\subseteq\{1,\ldots,n\}$;
then we say that the matrix $\{a_{ij}\}$ is an $(M,D)$-matrix, if
for any $i\in M$, $1\leq j\leq s$ we have $a_{ij}\notin D$. It's
clear that for any matrix $\{a_{ij}\}$ the following statement is
satisfied: the family of sets corresponding to $\{a_{ij}\}$ is an
$(n,m,s,l,d)$-family if and only if for any $M,D$ such that
$|M|=\lceil md/l\rceil$, $|D|=d$, the matrix $\{a_{ij}\}$ is not
an $(M,D)$-matrix.

Now we estimate the probability of the event that for the matrix
$\{a_{ij}\}$ there exist sets $M$, $D$, $|M|=\lceil md/l\rceil$,
$|D|=d$ such that $\{a_{ij}\}$ is an $(M,D)$-matrix. The number of
ways to select $M$ does not exceed $m^{\lceil md/l \rceil}$; the
number of ways to select $D$ does not exceed $n^d$; the
probability of the event that the matrix $\{a_{ij}\}$ is an
$(M,D)$-matrix (for fixed $M$, $D$) does not exceed
$(1-\frac{d}{n})^{s\lceil md/l\rceil}$; therefore, the estimated
probability does not exceed
$$
n^d\cdot m^{\lceil md/l\rceil}\cdot
\left(1-\frac{d}{n}\right)^{s\lceil md/l
\rceil}=n^d\cdot\left(m\left(\left(1-\frac{d}{n}\right)^{n/d}\right)^{sd/n}\right)^{\lceil
md/l \rceil}\leq $$ $$\leq n^d\cdot(m\cdot\exp(-sd/n))^{\lceil
md/l \rceil}\leq n^d\cdot(m\cdot\exp(-sd/n))^{md/l} = \exp
\left(\frac{dm \ln m}{l}+d\ln n-\frac{smd^2}{nl}\right)<1.
$$
The next to last inequality follows from
$$
\frac{dm \ln m}{l}-\frac{smd^2}{nl}<0,$$ i.e.,
$m\cdot\exp(-sd/n)<1$.

\end{proof}
\begin{conseq}
If $m=n$, $l<n$, $s>\frac{2n\ln n}{d}$, $d\leq n$, then there
exists an $(n,m,s,l,d)$-family.
\end{conseq}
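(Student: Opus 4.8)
The plan is to derive this directly from Statement~\ref{stat_family} by verifying that the hypotheses $m=n$, $l<n$, $s>\frac{2n\ln n}{d}$, $d\le n$ force the inequality
$$
\frac{dm\ln m}{l}+d\ln n-\frac{smd^2}{nl}<0
$$
required there. Thus the entire proof reduces to a single substitution-and-estimate computation, and no new construction is needed.

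First I would set $m=n$ throughout, which turns the left-hand side into
$$
\frac{dn\ln n}{l}+d\ln n-\frac{sd^2}{l}.
$$
The middle term $d\ln n$ is the only one not carrying a factor $1/l$, so I would set it aside and concentrate on the two terms that scale like $1/l$. Next I would use the hypothesis $s>\frac{2n\ln n}{d}$ to bound the negative last term: it gives $\frac{sd^2}{l}>\frac{2nd\ln n}{l}$, whence
$$
\frac{dn\ln n}{l}-\frac{sd^2}{l}<\frac{dn\ln n}{l}-\frac{2nd\ln n}{l}=-\frac{nd\ln n}{l}.
$$
Adding back the middle term and factoring out $d\ln n$, the whole expression is bounded above by
$$
d\ln n\left(1-\frac{n}{l}\right).
$$

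Finally I would invoke $l<n$, so that $n/l>1$ and the bracket is strictly negative, while $d\ln n>0$ (assuming $n\ge 2$; the case $n=1$ is trivial). Hence the product is negative, the inequality of Statement~\ref{stat_family} holds, and an $(n,m,s,l,d)$-family exists.

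I do not expect any genuine obstacle, since the corollary is a direct specialization of Statement~\ref{stat_family}. The only points to watch are that the strict bound on $s$ leaves exactly enough slack — the factor $2$ is what cancels the positive $\frac{dn\ln n}{l}$ term and then, via $l<n$, dominates the leftover $d\ln n$ — and that the degenerate case $n=1$ (where $\ln n=0$) is dismissed separately.
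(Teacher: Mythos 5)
Your proposal is correct and follows essentially the same route as the paper: substitute $m=n$, use $s>\frac{2n\ln n}{d}$ to bound $\ln n-\frac{sd}{n}<-\ln n$, and conclude with the factor $\left(1-\frac{n}{l}\right)d\ln n<0$ from $l<n$. The only cosmetic difference is that you group the $1/l$ terms directly rather than factoring out $\frac{nd}{l}$ first, and you explicitly flag the degenerate case $n=1$, which the paper leaves implicit.
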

\begin{proof}[Proof]
Indeed,
$$
\frac{dm \ln m}{l}+d\ln n-\frac{smd^2}{nl}=\frac{md}{l}\left(\ln
m-\frac{sd}{n}\right)+d\ln n=\frac{nd}{l}\left(\ln
n-\frac{sd}{n}\right)+d\ln n<$$ $$<\frac{nd}{l}\cdot(-\ln n)+d\ln
n=\left(1-\frac{n}{l}\right)d\ln n<0.
$$
\end{proof}

\begin{theorem}There exists a family of monotone circuits $\Sigma_{n,l}$ such that for any
$n,l$, $\Sigma_{n,l}$ realizes $\ffreachp_{n,l}$, and
$$
\ffdepth(\Sigma_{n,l})=\log_2 n\cdot\log_2 l-\frac{1}{2}(\log_2
l)^2+o((\log n)^2).
$$
\end{theorem}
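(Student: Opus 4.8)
The plan is to build $\Sigma_{n,l}$ by iterating the reduction of Statement~\ref{stat_reachp_ind}, supplying it at each stage with a family produced by the corollary to Statement~\ref{stat_family}, taken with $m$ equal to the current number of vertices. Write $N=\log_2 n$ and $L=\log_2 l$. We may assume $l<n$: if $l\ge n-1$ every existing path is already short, so a circuit for $\ffreachp_{n,n-1}$ (full reachability) serves for all larger $l$, and the regime of interest has $l\le n$. Fix once and for all a ``step size'' $\delta=\delta(n)$ lying in the window $\log\log n\ll\delta\ll\log n$ --- for instance $\delta=\sqrt{\log_2 n\cdot\log_2\log_2 n}$, which satisfies $\delta\to\infty$, $\delta=\omega(\log\log n)$ and $\delta=o(\log n)$. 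The single idea behind the improvement is that each reduction shrinks the graph (from $n_i$ vertices to $n_{i+1}\approx 2n_i\ln n_i/d_i$ vertices) in near-lockstep with the length budget (from $l_i$ to $\lfloor l_i/d_i\rfloor$), so that the vertex count falls roughly from $n$ to $n/l$ while the length falls from $l$ to $1$.

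Concretely, set $n_0=n$, $l_0=l$, and write $x_i=\log_2 n_i$, $\lambda_i=\log_2 l_i$. As long as $l_i>2^{\delta}$, take $d_i=\lfloor 2^{\delta}\rfloor$ and $s_i=\lceil 2n_i\ln n_i/d_i\rceil+1$. Because $m=n_i$, $l_i<n_i$, $s_i>2n_i\ln n_i/d_i$ and $d_i\le l_i<n_i$, the corollary yields an $(n_i,n_i,s_i,l_i,d_i)$-family, and Statement~\ref{stat_reachp_ind} reduces $\ffreachp_{n_i,l_i}$ to $\ffreachp_{n_{i+1},l_{i+1}}$ with $n_{i+1}=s_i+2$, $l_{i+1}=\lfloor l_i/d_i\rfloor$, at an additive depth cost
$$
c_i=\log_2 n_i+\log_2 n_i\cdot\log_2 d_i+f(n_i)=x_i\delta_i+O(\log n),\qquad \delta_i=\log_2 d_i=\delta+O(1),
$$
using $x_i\le N=O(\log n)$. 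Once $l_i\le 2^{\delta}$ I stop and finish by repeated squaring of $A\vee I$: $\lceil\log_2 l_i\rceil\le\delta$ squarings compute a matrix whose $(1,n_i)$ entry realizes $\ffreachp_{n_i,l_i}$ (it reports paths up to some threshold in $[l_i,2l_i)$, which is admissible under the promise), of depth $O(\delta\log n)=o(N^2)$. Since each step divides $l_i$ by at least $2^{\delta-1}$, the number of steps is $T=O(L/\delta)=O(N/\delta)$.

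It remains to sum the costs: the total depth is $\sum_{i<T}c_i+o(N^2)=\sum_{i<T}x_i\delta_i+T\cdot O(\log n)+o(N^2)$, where $T\cdot O(\log n)=O(N^2/\delta)=o(N^2)$. From $n_{i+1}\approx 2n_i\ln n_i/d_i$ one gets $x_{i+1}=x_i-\delta_i+\beta_i$ with $\beta_i=\log_2(2\ln n_i)=O(\log\log n)$, whence $x_i=(N-L+\lambda_i)+e_i$ where the accumulated ``inflation'' error is $e_i=\sum_{j<i}\beta_j+O(1)=O(i\log\log n)$. Splitting accordingly, $\sum_i(N-L+\lambda_i)\delta_i$ is a Riemann sum for $\int_0^{L}(N-L+\lambda)\,d\lambda=NL-\tfrac12 L^2$, with discretization error $O(L\delta)$ and a stopping gap $O(N\delta)$, both $o(N^2)$ since $\delta=o(N)$; and the error term contributes $\sum_i e_i\delta_i\le O(T\log\log n)\sum_i\delta_i=O\!\big(NL\log\log n/\delta\big)=o(N^2)$ since $\delta=\omega(\log\log n)$. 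Hence the total depth is $NL-\tfrac12 L^2+o(N^2)=\log_2 n\cdot\log_2 l-\tfrac12(\log_2 l)^2+o((\log n)^2)$.

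I expect the bookkeeping of the two error sources to be the only delicate point. The per-step additive $O(\log n)$ is harmless once $\delta\to\infty$, but the geometric factor $2\ln n_i$ that the probabilistic family construction loses at every stage is dangerous: its contributions accumulate and are tamed only because $\delta=\omega(\log\log n)$. Together with the requirement $\delta=o(\log n)$ needed to keep the Riemann error and the stopping gap small, this pins $\delta$ into the window $\log\log n\ll\delta\ll\log n$, which is nonempty; verifying that all four error terms stay $o((\log n)^2)$ for such $\delta$ is the heart of the argument, while the main term $NL-\tfrac12 L^2$ comes out automatically from the lockstep shrinkage of $n_i$ and $l_i$.
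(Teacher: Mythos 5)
Your proposal is correct and follows essentially the same route as the paper: iterate the reduction of Statement~\ref{stat_reachp_ind} with families supplied by the corollary of Statement~\ref{stat_family} at a fixed step size $d=2^{\delta}$, finish with repeated squaring once the length budget is small, and sum the per-level costs $\log_2 n_i\cdot\log_2 d$. The paper simply makes the concrete choice $\delta=\sqrt{\log_2 n}$ (which lies inside your window $\log\log n\ll\delta\ll\log n$) with $n_i=[nq^i/d^i]$, $q=2\ln n+3$, and evaluates the resulting sum in closed form rather than as a Riemann sum with error bookkeeping.
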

\begin{proof}[Proof]
Now we describe the construction of the circuit $\Sigma_{n,l}$ by
given $n$, $l$. Without loss of generality we can assume that
$l<n$. Put
$$
d=\left[2^{\sqrt{\log_2 n}}\right],\quad k=\left[\log_d
l\right],\quad q=2\ln n+3,\quad m=n.
$$
Now we construct the monotone circuits
$\Sigma_k,\Sigma_{k-1},\ldots,\Sigma_0$ such that $\Sigma_i$
realizes $\ffreachp_{n_i,l_i}$ ($0\leq i\leq k$), and
$$
l_i=\left[\frac{l}{d^i}\right],\quad
n_i=\left[\frac{nq^i}{d^i}\right]
$$
($0\leq i\leq k$).

The circuit $\Sigma_k$ can be constructed using statement
 \ref{statborodin}, i.e.
 $$
 \ffdepth(\Sigma_k)=\log_2 n_k\cdot\log_2 l_k+O(\log n)=O(\log
 n\cdot \log d)=o((\log n)^2).
 $$

 Now we describe the construction of $\Sigma_i$ ($i<k$) under assumption that $\Sigma_{i+1}$ is already constructed. Let us prove that for any $i<k$ there exists an $(n_i,n_i,n_{i+1}-2,l_i,d)$-family. Indeed,
 $$
 n_{i+1}-2=\left[\frac{nq^{i+1}}{d^{i+1}}\right]-2>\frac{nq^{i+1}}{d^{i+1}}-3=\frac{nq^{i+1}-3d^{i+1}}{d^{i+1}}\geq\frac{nq^{i+1}-3nq^i}{d^{i+1}}=
 \frac{nq^i(q-3)}{d^{i+1}}=$$ $$=\frac{2\ln
 n}{d}\cdot\frac{nq^i}{d^i}\geq\frac{2\ln
 n}{d}\cdot\left[\frac{nq^i}{d^i}\right]=\frac{2\ln n}{d}\cdot
 n_i;
 $$
this and the corollary of statement \ref{stat_family} prove an
existence of the family of sets.

From statement \ref{stat_reachp_ind} and an existence of a family
it follows that we can construct $\Sigma_i$ such that
 $$
 \ffdepth(\Sigma_i)=\ffdepth(\Sigma_{i+1})+\log_2 n_i\log_2
 d+O(\log n).
 $$

 Therefore, taking into account $\Sigma_{n,l}=\Sigma_0$, we obtain
$$
\ffdepth(\Sigma)=\log_2 d\cdot \sum_{i=0}^{k-1}\log_2 n_i+o((\log
n)^2)=\log_2 d\cdot\sum_{i=0}^{k-1}(\log_2 n+i\log_2 q-i\log_2
d)+o((\log n)^2)=$$ $$=k\log_2 d\log_2 n+\frac{k(k-1)}{2}(\log_2
q-\log_2 d)\log_2 d+o((\log n)^2)=$$ $$=\log_2 l\cdot \log_2
n+\frac{1}{2}\left(\frac{\log_2 l}{\log_2 d}\right)^2(\log_2
q-\log_2 d)\log_2 d+o((\log n)^2)=$$ $$=\log_2 l\log_2
n-\frac{1}{2}(\log_2 l)^2\left(1-\frac{\log_2 q}{\log_2
d}\right)+o((\log n)^2)=\log_2 n\log_2 l-\frac{1}{2}(\log_2
l)^2+o((\log n)^2).
$$
\end{proof}
\begin{conseq}There exists a sequence of monotone circuits
$\Sigma_n$ such that for any $n$, $\Sigma_n$ realizes
$\ffreach_n$, and
$$
\ffdepth(\Sigma_n)\sim\frac{1}{2}(\log_2 n)^2.
$$
\end{conseq}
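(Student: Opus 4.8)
The plan is to obtain $\ffreach_n$ as a special case of $\ffreachp_{n,l}$ by choosing $l$ large enough that the length restriction becomes vacuous. First I would recall the elementary graph fact that in a graph on $n$ vertices, if there is any path from $1$ to $n$, then there is a simple path from $1$ to $n$, and its length is at most $n-1$. Consequently, for $l=n-1$ the ``undefined'' case in the definition of $\ffreachp_{n,l}$ never arises: the graph $\ffg$ has a path from $1$ to $n$ of length $\le n-1$ exactly when it has any path from $1$ to $n$. Hence on every input the partial function $\ffreachp_{n,n-1}$ is defined and takes the same value as $\ffreach_n$.

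Given this, I would set $\Sigma_n:=\Sigma_{n,n-1}$, where $\Sigma_{n,l}$ is the family furnished by the Theorem (note $l=n-1<n$, as required there). Since a circuit realizing a partial function must be correct on every input where that function is defined, and $\ffreachp_{n,n-1}$ is total and coincides with $\ffreach_n$, the circuit $\Sigma_n$ realizes $\ffreach_n$. This settles correctness, and it remains only to evaluate the depth.

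For the depth, I would substitute $l=n-1$ into the estimate of the Theorem, giving
$$\ffdepth(\Sigma_n)=\log_2 n\cdot\log_2(n-1)-\tfrac{1}{2}(\log_2(n-1))^2+o((\log n)^2).$$
Then I would use $\log_2(n-1)=\log_2 n+o(1)$ to replace $\log_2(n-1)$ by $\log_2 n$ in both terms; each replacement alters the expression by at most $O(\log n\cdot o(1))=o(\log n)$, which is absorbed into the error term. The leading terms then combine as $(\log_2 n)^2-\tfrac{1}{2}(\log_2 n)^2=\tfrac{1}{2}(\log_2 n)^2$, so $\ffdepth(\Sigma_n)=\tfrac{1}{2}(\log_2 n)^2+o((\log n)^2)\sim\tfrac{1}{2}(\log_2 n)^2$.

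The argument is short, and the only point requiring care — which I would treat as the main obstacle — is the passage from the partial function $\ffreachp_{n,n-1}$ to the total function $\ffreach_n$: one must justify that the bound $n-1$ empties the ``undefined'' region, so that realizing the partial function automatically realizes $\ffreach_n$. The asymptotic bookkeeping, namely verifying that replacing $\log_2(n-1)$ by $\log_2 n$ costs only $o((\log n)^2)$, is entirely routine.
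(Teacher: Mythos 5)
Your proposal is correct and is exactly the intended derivation: the paper states this corollary without proof, and the evident argument is to set $l=n-1$ in the Theorem, observe that any path from $1$ to $n$ can be shortened to a simple path of length at most $n-1$ so that $\ffreachp_{n,n-1}$ is total and coincides with $\ffreach_n$, and then substitute into the depth bound. Your asymptotic bookkeeping with $\log_2(n-1)=\log_2 n+o(1)$ is also fine.
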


\subsection*{An explicit construction}

Now we describe an explicit construction that proves the upper
bound $\frac{7}{8}(\log_2 n)^2$.

Take a prime $q$. Suppose $Q=\mathrm{GF}(q)$.

The set $\{(x,y):\ x,y\in Q\}$ is called the \emph{plane}.

Elements of the plane are called \emph{points}.

For any $a,b,c\in Q$ such that $a\neq 0$ or $b\neq 0$, the set
$\{(x,y):\ x,y\in Q,\ ax+by+c=0\}$ is called a \emph{line}.

\begin{stat}\label{stat_lines}
Suppose $\overline{s}_1,\ldots,\overline{s}_l$ are distinct lines,
$|\bigcup_{i=1}^{l}\overline{s}_i|=q^2-u$; then we have
$$
l\leq \frac{(q+1)(q^2-u)}{u+q}.
$$
\end{stat}
\begin{proof}[Proof]
Now we give some additional definitions. Two different lines are
said to be \emph{parallel}, if they don't intersect. By
$\ffdir(\overline{l})$ we denote the set of all lines
$\overline{l}'$ such that $\overline{l}$ parallel to
$\overline{l}'$. The set $\ffdir(\overline{l})$ is called the
\emph{direction} of the line $\overline{l}$. Note that the number
of points in the plane is $q^2$, the number of lines in the plane
is $q(q+1)$, the number of directions is $q+1$. The set of all
directions is denoted by $\tilde{D}$.

Let $\overline{L}$ be the considered set of lines
($|\overline{L}|=l$), let $\overline{V}$ be the complement of
$\overline{L}$ to the set of all lines in the plane
($|\overline{V}|=v=q(q+1)-l$). Suppose
$P=\bigcup_{\overline{l}\in\overline{L}}\overline{l}$, $U$ is the
complement of $P$ to the plane ($|P|=p=q^2-u$, $|U|=u$).

For any line $\overline{l}$ and set of points $U$ by
$U_{\overline{l}}$ denote the set $U\cap \overline{l}$; put
$u_{\overline{l}}=|U_{\overline{l}}|$. For any direction
$\tilde{d}$ and set of lines $\overline{V}$ by
$\overline{V}_{\tilde{d}}$ denote the set of all lines
$\overline{l}$ such that $\overline{l}\in\overline{V}$ and
$\ffdir(\overline{l})=\tilde{d}$; put
$v_{\tilde{d}}=|\overline{V}_{\tilde{d}}|$.

By $R$ denote the set of all unordered pairs of points
$\{\alpha,\beta\}$ such that $\alpha,\beta\in U$ and $\alpha\neq
\beta$. Analogously, for any line $\overline{l}$ by
$R_{\overline{l}}$ denote the set of all unordered pairs of points
$\{\alpha,\beta\}$ such that $\alpha,\beta\in U_{\overline{l}}$
and $\alpha\neq \beta$.

Note that for any line $\overline{l}$ such that $\overline{l}\cap
U\neq\varnothing$, we have $\overline{l}\in\overline{V}$; also,
note that for any two different points there exists a unique line
that contains these points. From these facts it follows that the
set $\{R_{\overline{l}}:\ \overline{l}\in\overline{V}\}$ is a
partition of $R$. Therefore,
\begin{equation}\label{eq_lines_razb}
|R|=\sum_{\overline{l}\in\overline{V}}|R_{\overline{l}}|.
\end{equation}

Analogously, we see that for any direction $\tilde{d}$ the set
$\{U_{\overline{l}}:\ \overline{l}\in\overline{V}_{\tilde{d}}\}$
is a partition of the set $U$. Hence,
\begin{equation}\label{eq_lines_point_razb}
\sum_{\overline{l}\in\overline{V}_{\tilde{d}}}u_{\overline{l}}=u.
\end{equation}

Therefore, we have
$$
\frac{u(u-1)}{2}=|R|=\sum_{\overline{l}\in\overline{V}}|R_{\overline{l}}|=\sum_{\tilde{d}\in\tilde{D}}
\sum_{\overline{l}\in\overline{V}_{\tilde{d}}}|R_{\overline{l}}| =
\sum_{\tilde{d}\in\tilde{D}}
\sum_{\overline{l}\in\overline{V}_{\tilde{d}}}\frac{u_{\overline{l}}(u_{\overline{l}}-1)}{2}
\geq $$ $$\geq \frac{1}{2}\sum_{\tilde{d}\in
\tilde{D}}\left(\frac{u^2}{v_{\tilde{d}}}-u\right)\geq
\frac{1}{2}\left(\frac{(q+1)^2 u^2}{v}-u(q+1)\right),
$$
where the second equality follows from (\ref{eq_lines_razb}), the
first inequality follows from (\ref{eq_lines_point_razb}) and the
arithmetic-quadratic means inequality, the second inequality
follows from $\sum_{\tilde{d}\in\tilde{D}}v_{\tilde{d}}=v$ and the
arithmetic-harmonic means inequality.

Solving this inequality, we get
$$
v\geq\frac{u(q+1)^2}{u+q}.
$$
In other words,
$$
l\leq q(q+1)-\frac{u(q+1)^2}{u+q}=\frac{(q+1)(q^2-u)}{u+q}.
$$
\end{proof}

Now we describe an effective generation algorithm of
$(n,m,s,n,d)$-families by $n$, where $m\asymp n$, $s\asymp
n^{1/2}$, $d\asymp n^{3/4}$. Let $q$ be the minimal prime such
that $q^2\geq n$ (from Bertrand's postulate it follows that
$q\asymp n^{1/2}$). Suppose $d$ is the minimal natural number such
that
$$
\frac{d}{q}>\frac{q^2-d}{d+q}.
$$
It's clear that $d\asymp n^{3/4}$. By $\ssplane_q$ denote the
plane over the residual field by modulo $q$. From statement
$\ref{stat_lines}$ it follows that the set (arbitrary ordered) of
all lines in $\ssplane_q$ is a $(q^2,q(q+1),q,q^2,d)$-family. Take
some set $A\subseteq\ssplane_q$, $|A|=n$. Then the set
$$\{\overline{l}\cap A:\ \overline{l}\text{ is a line in
}\ssplane_q\}$$ (arbitrary ordered) is the needed family.

Now we describe an effective generation algorithm of the monotone
circuit $\Sigma_n$ by $n$, where $\Sigma_n$ realizes $\ffreach_n$,
and
$$
\ffdepth(\Sigma_n)\thicksim \frac{7}{8}(\log_2 n)^2.
$$

By the described algorithm we construct an $(n,m,s,n,d)$-family of
sets, where $m\asymp n$, $s\asymp n^{1/2}$, $d\asymp n^{3/4}$.
After this, using statement \ref{statborodin} we construct the
monotone circuit $\Sigma'_n$ such that $\Sigma'_n$ realizes
$\ffreachp_{s+2,[n/d]}$, and $\ffdepth(\Sigma'_n)\thicksim
\frac{1}{8}(\log_2 n)^2$. In the next place, using the constructed
family and statement \ref{stat_reachp_ind} (more precisely, the
proof of this statement), we obtain the needed circuit $\Sigma_n$.
It's clear that this algorithm works polynomial time (of n).
\end{document}